\newtheorem{theorem}{Theorem}
\newenvironment{proof}[1][Proof]{\noindent \textbf{#1.} }{\  \rule{0.5em}{0.5em}}
\title{Large scale continuous-time mean-variance portfolio allocation via reinforcement learning}
\author{%
  Haoran Wang \\
  Department of Industrial Engineering and Operations Research\\
  Columbia University\\
  New York, NY 10027 \\
}
\begin{document}

\maketitle

\begin{abstract}
We propose to solve large scale Markowitz mean-variance (MV) portfolio allocation problem using reinforcement learning (RL). By adopting the recently developed continuous-time exploratory control framework, we formulate the exploratory MV problem in high dimensions. We further show the optimality of a multivariate Gaussian feedback policy, with time-decaying variance, in trading off exploration and exploitation. Based on a provable policy improvement theorem, we devise a scalable and data-efficient RL algorithm and conduct large scale empirical tests using data from the S\&P $500$ stocks. We found that our method consistently achieves over $10\%$ annualized returns and it outperforms econometric methods and the deep RL method by large margins, for both long and medium terms of investment with monthly and daily trading.
\end{abstract}

\section{Introduction}

Reinforcement learning (RL) has demonstrated to be sucessful in games (\citep{Go1}, \cite{Go2}, \cite{DQN}) and robotics (\cite{robot1}, \cite{robotics2}), which also raised significant attention on its applications to quantitative finance. Notable examples include large scale optimal order execution using clssical Q-learning method (\cite{NFK}),  portfolio allocation using direct policy search (\cite{MS}, \cite{MS2}), and option pricing and hedging using deep RL methods (\cite{deep_hedging}), among others.

However, most existing works only focus on RL problems with expected utility of discounted rewards. Such criteria are either unable to fully characterize the uncertainty of the decision making process in financial  markets or opaque to  typical investors. On the other hand,
mean--variance (MV)  is one of the most important criteria for portfolio choice. Initiated in the Nobel Prize winning work \cite{M} for portfolio selection  in a single period, such a criterion yields an asset allocation strategy that minimizes the variance of the final payoff while targeting some prespecified mean return. The popularity of the MV criterion is not only due to its intuitive and transparent nature  in capturing the tradeoff between  risk and reward for practitioners, but also due to the theoretically interesting  issue of time-inconsistency (or Bellman's inconsistency) inherent with the underlying stochastic optimization and control problems.

In a recent paper \cite{hwang_2}, the authors established an RL framework for studying the continuous-time MV portfolio selection, with continuous portfolio (action) and wealth (state) spaces. Their framework adopts a general entropy-regularized,  relaxed stochastic control formulation, known as the {\it exploratory} formulation, which was originally developed in \cite{Hwang} to capture explicitly
the tradeoff between exploration and exploitation in RL for continuous-time optimization problems. The paper \cite{hwang_2} proved the optimality of Gaussian exploration (with time-decaying variance) for the MV problem in one dimension, and proposed a data-driven algorithm, the EMV algorithm, to learn the optimal Gaussian policy of the exploratory MV problem. Their simulation shows that the EMV algorithm outperforms both a classical econometric method and the deep deterministic policy gradient (DDPG) algorithm by large margins when solving the MV problem in the setting with only one risky asset.

It is the contribution of this work to generalize the continuous-time exploratory MV framework in \cite{hwang_2} to large scale portfolio selection setting, with the number of risky assets being relatively large and the available training data being relatively limited. We establish the theoretical optimality of the high-dimensional Gaussian policy and design a scalable EMV algorithm to directly output portfolio allocation strategies. By switching to portfolio selection in high dimensions, we can in principle take more advantage of the diversification effect (\cite{diversification}) to have better performance while, however, potentially encountering the challenges of low sample efficiency and instability faced by most deep RL methods (\cite{Deep_RL_1}, \cite{Deep_RL_2}). Nevertheless, although the EMV algorithm is an on-policy approach, it can achieve better data efficiency than the off-policy method DDPG, thanks to a provable policy improvement theorem and the explicit functional structures of the theoretical optimal Gaussian policy and value function. For instance, in a $10$ years monthly trading experiment (see section 5.2) where the available data point for training is the same amount as the decision making times for testing, the EMV algorithm still can outperform various alternative methods considered in this paper. To further empirically test the performance and robustness of the EMV algorithm, we conduct experiments using both monthly and daily price data of the S\&P $500$ stocks, for long and medium term investment horizons. Annual returns over $10\%$ have been consistently observed across most experiments. The EMV algorithm also demonstrated remarkable universal applicability, as it can be trained and tested respectively on {\it different} sets of data from stocks that are randomly selected and still achieves competitive and, actually, more robust performance (see Appendix \ref{batch}).

\section{Notations \& Background}
\subsection{Classical continuous-time MV problem}
We consider the classical MV problem in continuous time (without RL), where the investment universe consists of one riskless asset (savings account) and $d$ risky assets (e.g., stocks). Let an investment planning horizon $T>0$ be fixed. Denote by $\{x^u_t,0\leq t\leq T\}$ the {\it discounted} wealth (i.e. state) process of an agent who rebalances her portfolio (i.e. action) investing in  the risky and riskless assets with a strategy (policy) $u=\{u_t, 0\leq t\leq T\}$. Here $u_t=(u_t^1,\dots,u_t^d)$ is the discounted dollar value put in the $d$ risky assets at time $t$. Under the geometric Brownian motion assumption for stocks prices and the standard self-financing condition, it follows (see Appendix \ref{wealth_appendix}) that the wealth process satisfies
\begin{equation}\label{classical_wealth}
dx^u_t=\sigma u_t\cdot (\rho\, dt+\,dW_t), \quad 0\leq t\leq T,
\end{equation}
with an initial endowment being $x^u_0=x_0\in \mathbb{R}$. Here, $W_t=(W^1_t,\dots, W^d_t)$, $0\leq t\leq T$, is a standard $d$-dimensional Brownian motion defined on a filtered probability space $(\Omega, \mathcal{F},\{\mathcal{F}_{t}\}_{0\leq t\leq T},\mathbb{P})$. The vector\footnote{All vectors in this paper are taken as column vectors.} $\rho$ is typically known as the market price of risk, and $\sigma\in \mathbb{R}^{d\times d}$ is the volatility matrix which is assumed to be non-degenerate.

The classical continuous-time MV model then aims to solve the following constrained optimization problem
\begin{eqnarray}
\min_{u} \text{Var}[x^u_T], \quad \text{subject to} \ \mathbb{E}[x^u_T]=z,\label{target}
\end{eqnarray}
where $\{x^u_t,0\leq t\leq T\}$ satisfies the dynamics (\ref{classical_wealth}) under the investment strategy (portfolio) $u$, and $z\in \mathbb{R}$ is an investment target set at $t=0$ as  the desired target payoff at the end of the investment horizon $[0,T]$.

Due to the variance in its objective, (\ref{target}) is known to be {\it time inconsistent}. In this paper we focus ourselves to the so-called \textit{pre-committed} strategies of the MV problem, which are optimal at $t=0$ only. To solve (\ref{target}), one first transforms it into an unconstrained problem by applying a Lagrange multiplier $w$:\footnote{Strictly speaking, $2w\in \mathbb{R}$ is the Lagrange multiplier.}
\begin{equation}\label{unconstrained_classical}
\min_u \mathbb{E}[(x^u_T)^2]-z^2-2w\left(\mathbb{E}[x^u_T]-z\right)=\min_u \mathbb{E}[(x^u_T-w)^2]-(w-z)^2.
\end{equation}
This problem can be solved analytically, whose solution $u^*=\{u^*_t,0\leq t\leq T\}$ depends on $w$.
Then the original constraint $\mathbb{E}[x^{u^*}_T]=z$ determines the value of
$w$.  We refer a detailed derivation to \cite{MV_zhou}.

\subsection{Exploratory continuous-time MV problem}
The classical MV solution requires the estimation of
 the market parameters from historical time series of assets prices. However, as well known in practice, it is difficult to estimate the investment opportunity parameters, especially the mean return vector (aka the {\it mean--blur problem}; see, e.g.,  \cite{Luenberger}) with a workable  accuracy. Moreover, the classical optimal MV strategies are often extremely sensitive to these parameters,  largely due to the procedure of inverting ill-conditioned covariance matrices  to obtain optimal allocation weights. In view of these two issues, the Markowitz solution can be greatly irrelevant to the underlying investment objective.

On the other hand, RL techniques do not require, and indeed often skip, any estimation of model parameters. Rather, RL algorithms, driven by historical data,  output optimal (or near-optimal) allocations directly. This is made possible by
direct interactions with the unknown investment environment, in a learning (exploring) while optimizing (exploiting) fashion.
Following \cite{Hwang}, we introduce the ``exploratory" version of the state dynamics (\ref{classical_wealth}). In this formulation, the control (portfolio) process $u=\{u_t, 0\leq t\leq T\}$ is randomized to represent exploration in RL, leading to
a measure-valued or distributional control process whose density function is given by $\pi=\{\pi_t,0\leq t\leq T\}$. The dynamics (\ref{classical_wealth}) is changed to
\begin{eqnarray}
dX^{\pi}_t &=&\left(\int_{\mathbb{R}^d} \rho' \sigma u\pi_t(u)du\right)\,dt+\left({\int_{\mathbb{R}^d} u'\sigma'\sigma u\pi_t(u)du}\right)^{\frac{1}{2}}\, dB_t,
 \label{state_process0}
\end{eqnarray}
where $\{B_t,0\leq t\leq T\}$ is a 1-dimensional standard Brownian motion on the filtered probability space $(\Omega, \mathcal{F},\{\mathcal{F}\}_{0\leq t\leq T},\mathbb{P})$. Mathematically,
(\ref{state_process0}) coincides with the {\it relaxed control} formulation in classical control theory, and it is adopted here to characterize the effect of exploration on the underlying continuous-time state dynamics change.
We refer the readers to \cite{Hwang} for a  detailed discussion on the motivation of (\ref{state_process0}).

The randomized, distributional  control process $\pi=\{\pi_t, 0\leq t\leq T\}$ is to model
exploration, whose overall level is in turn captured by
its accumulative differential entropy
\begin{equation}\label{entropy}
\mathcal{H}(\pi):=-\int_0^T\int_{\mathbb{R}^d}\pi_t(u)\ln \pi_t(u)dudt.
\end{equation}
Further, we introduce a {\it temperature parameter} (or {\it exploration weight}) $\lambda>0$ reflecting  the tradeoff between exploitation and exploration.
The entropy-regularized, exploratory  MV problem is then to solve, for any fixed $w\in \mathbb{R}$:
\begin{equation}\label{value_function}
\min_{\pi\in \mathcal{A}(0,x_0)}\mathbb{E}\left[(X_T^{\pi}-w)^2+\lambda \int_0^T\int_{\mathbb{R}^d}\pi_t(u)\ln \pi_t(u)dudt\right]-(w-z)^2,
\end{equation}
where $\mathcal{A}(0,x_0)$ is the set of admissible distributional controls on $[0,T]$ whose precise definition is relegated to Appendix \ref{value_function_def}. Once this problem is solved with a minimizer
$\pi^*=\{\pi^*_t, 0\leq t\leq T\}$, the Lagrange multiplier $w$ can  be determined by the additional constraint $\mathbb{E}[X^{\pi^*}_T]=z$.

The optimization objective (\ref{value_function}) explicitly encourages exploration, in contrast to the classical problem (\ref{unconstrained_classical}) which concerns exploitation only.

\section{Optimality of Gaussian Exploration}

To solve the exploratory MV problem (\ref{value_function}), we apply the classical Bellman's principle of optimality for the optimal value function $V$ (see Appendix \ref{value_function_def} for the precise definition of $V$):
$$V(t,x;w)=\inf_{\pi\in \mathcal{A}(t,x)}\mathbb{E}\left[V(s,X^{\pi}_s;w)+\lambda\int_t^s\int_{\mathbb{R}^d}\pi_{l}(u)\ln \pi_{l}(u)dudl\Big | X^{\pi}_t=x\right],$$
for $x\in \mathbb{R}$ and $0\leq t< s\leq T$. Following standard arguments, we deduce that $V$ satisfies the Hamilton-Jacobi-Bellman (HJB) equation
\begin{equation}\label{HJB_2}
v_t(t,x;w)+\min_{\pi\in \mathcal{P}(\mathbb{R}^d)}\int_{\mathbb{R}^d}\left(\frac{1}{2}u'\sigma'\sigma u v_{xx}(t,x;w)+\rho'\sigma u v_x(t,x;w)+\lambda \ln\pi(u)\right)\pi(u)du=0,
\end{equation}
with the terminal condition $v(T,x;w)=(x-w)^2-(w-z)^2$. Here, $\mathcal{P}%
\left( \mathbb{R}^d\right) $ denotes the set of density functions of probability measures on $\mathbb{R}^d$ that are absolutely
continuous with respect to the Lebesgue measure and $v$ denotes the generic unknown solution to the HJB equation.

Applying the usual verification technique and using the fact that $\pi\in \mathcal{P}(\mathbb{R}^d)$ if and only if
$\int_{\mathbb{R}^d}\pi (u)du=1$ and $\pi (u)\geq 0$, a.e., on $\mathbb{R}^d$,
we can solve the (constrained) optimization problem in the HJB equation (\ref{HJB_2}) to obtain a feedback  (distributional) control whose density function is given by
\begin{eqnarray}\nonumber
\boldsymbol{\pi} ^{\ast }(u;t,x,w)&=&\frac{\exp\left(-\frac{1}{\lambda}\left(\frac{1}{2}u'\sigma'\sigma u v_{xx}(t,x;w)+\rho' \sigma u v_x(t,x;w)\right)\right)}{\int_{\mathbb{R}^d}\exp\left(-\frac{1}{\lambda}\left(\frac{1}{2}u'\sigma'\sigma u v_{xx}(t,x;w)+\rho' \sigma u v_x(t,x;w)\right)\right)du}\\
&=& \mathcal{N}\left(\, u\, \Big | -{\sigma}^{-1}\rho\frac{v_x(t,x;w)}{v_{xx}(t,x;w)}\ , \ \left(\sigma'\sigma\right)^{-1}\frac{\lambda}{v_{xx}(t,x;w)}\right),\label{Gaussian}
\end{eqnarray}
where $\mathcal{N}(u|\beta,\Sigma)$ denotes the Gaussian density function with mean vector $\beta$ and covariance matrix $\Sigma$. It is assumed in (\ref{Gaussian}) that $v_{xx}(t,x;w)>0$, which will be verified in what follows.

Substituting the candidate optimal Gaussian feedback control policy (\ref{Gaussian}) back into the HJB equation (\ref{HJB_2}), the latter is transformed to
\begin{equation}\label{HJB_3}
v_t(t,x;w)-\frac{\rho'\rho}{2}\frac{v^2_x(t,x;w)}{v_{xx}(t,x,w)}+\frac{\lambda}{2}\left(d-d\ln\left( \frac{2\pi e \lambda}{v_{xx}(t,x;w)}\right)+\ln\left(|\sigma'\sigma|\right)\right)=0,
\end{equation}
with $v(T,x;w)=(x-w)^2-(w-z)^2$, where $|\cdot|$ denotes the matrix determinant. A direct computation yields that this equation has a classical solution
\begin{equation}\label{solution_to_HJB}
v(t,x;w)=(x-w)^2e^{-\rho'\rho(T-t)}+\frac{\lambda d}{4}\rho'\rho\left(T^2-t^2\right)-\frac{\lambda d}{2}\left(\rho'\rho T-\frac{1}{d}\ln \frac{|\sigma'\sigma|}{\pi \lambda}\right)(T-t)-(w-z)^2,
\end{equation}
which clearly satisfies  $v_{xx}(t,x;w)>0$, for any $(t,x)\in [0,T]\times \mathbb{R}$. It then follows that the candidate optimal feedback Gaussian policy (\ref{Gaussian}) reduces to
\begin{equation}\label{Gaussian_explicit}
\boldsymbol{\pi} ^{\ast }(u;t,x,w)=\mathcal{N}\left(\, u\, \Big|  -{\sigma}^{-1}\rho(x-w)\ , \  \left(\sigma'\sigma\right)^{-1}\frac{\lambda}{2}e^{\rho'\rho (T-t)}\right),\;\;
(t,x)\in [0,T]\times \mathbb{R}.
\end{equation}

Finally, the  optimal wealth process (\ref{state_process0}) under $\boldsymbol{\pi} ^{\ast }$ becomes
\begin{equation}\label{wealth_SDE}
dX^{*}_t=-\rho'\rho (X^{*}_t-w)\, dt+\left(\rho'\rho\left(X^{*}_t-w\right)^2+\frac{\lambda}{2}e^{\rho'\rho(T-t)}\right)^{\frac{1}{2}}\, dB_t,\;
X^{*}_0=x_0.
\end{equation}
It has a unique strong solution for $0\leq t\leq T$, as can be easily verified. We now summarize the above results in the following theorem.
\begin{theorem}\label{verification}
The optimal value function of the entropy-regularized exploratory MV problem (\ref{value_function}) is given by
\begin{equation}\label{value_verified}
V(t,x;w)=(x-w)^2e^{-\rho'\rho(T-t)}+\frac{\lambda d}{4}\rho'\rho\left(T^2-t^2\right)-\frac{\lambda d}{2}\left(\rho'\rho T-\frac{1}{d}\ln \frac{|\sigma'\sigma|}{\pi \lambda}\right)(T-t)-(w-z)^2,
\end{equation}%
for $(t,x)\in [0,T]\times \mathbb{R}$.
Moreover,
the optimal feedback control is Gaussian, with its density function given by
\begin{equation}
\boldsymbol{\pi}^{\ast }(u;t,x,w)=\mathcal{N}\left(\, u\, \Big|  -{\sigma}^{-1}\rho(x-w)\ , \  \left(\sigma'\sigma\right)^{-1}\frac{\lambda}{2}e^{\rho'\rho (T-t)}\right).
\label{Gaussian_verified}
\end{equation}%
The associated optimal wealth process under $\boldsymbol{\pi}^{\ast }$
is the unique solution of the stochastic differential equation
\begin{equation}\label{wealth_SDE_1}
dX^{*}_t=-\rho'\rho (X^{*}_t-w)\, dt+\left(\rho'\rho\left(X^{*}_t-w\right)^2+\frac{\lambda}{2}e^{\rho'\rho(T-t)}\right)^{\frac{1}{2}}\, dB_t,\;X^{*}_0=x_0.
\end{equation}
Finally, the Lagrange multiplier $w$ is given by $w=\frac{ze^{\rho'\rho T}-x_0}{e^{\rho'\rho T}-1}$.
\end{theorem}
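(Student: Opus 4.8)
The plan is to treat this as a verification theorem: the candidate value function $v$ in (\ref{solution_to_HJB}) and the candidate feedback policy $\boldsymbol{\pi}^{\ast}$ in (\ref{Gaussian_explicit}) were obtained only heuristically by solving the HJB equation, so the task is to confirm rigorously that $V=v$ and that $\boldsymbol{\pi}^{\ast}$ is indeed a minimizer of (\ref{value_function}). First I would verify by direct differentiation that $v$ solves the reduced HJB equation (\ref{HJB_3}) together with the terminal condition $v(T,x;w)=(x-w)^2-(w-z)^2$. Since $v(t,x;w)=(x-w)^2e^{-\rho'\rho(T-t)}+g(t)$ for an $x$-independent function $g$, one has $v_x=2(x-w)e^{-\rho'\rho(T-t)}$ and $v_{xx}=2e^{-\rho'\rho(T-t)}>0$, which both validates the assumption $v_{xx}>0$ used to derive (\ref{Gaussian}) and guarantees that the covariance matrix $(\sigma'\sigma)^{-1}\lambda/v_{xx}$ is positive definite, so the Gaussian form is well posed. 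Substituting $v_x/v_{xx}=x-w$ into (\ref{Gaussian}) then recovers exactly (\ref{Gaussian_verified}).

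The core of the argument is the standard verification inequality. Fix $(t,x)$ and an arbitrary admissible $\pi\in\mathcal{A}(t,x)$, and apply It\^o's formula to $v(s,X^{\pi}_s;w)$ on $[t,T]$. The drift of the process $s\mapsto v(s,X^{\pi}_s;w)+\lambda\int_t^s\int_{\mathbb{R}^d}\pi_l(u)\ln\pi_l(u)\,du\,dl$ equals the integrand appearing in the HJB equation (\ref{HJB_2}) evaluated at $\pi_s$, which by (\ref{HJB_2}) is nonnegative for every $\pi$ and vanishes precisely when $\pi_s=\boldsymbol{\pi}^{\ast}$. Hence this process is a submartingale; taking expectations, using the terminal condition, and dropping the stochastic integral gives
\begin{equation*}
\mathbb{E}\left[(X^{\pi}_T-w)^2+\lambda\int_t^T\int_{\mathbb{R}^d}\pi_l\ln\pi_l\,du\,dl\,\Big|\,X^{\pi}_t=x\right]-(w-z)^2\;\geq\;v(t,x;w),
\end{equation*}
with equality for $\boldsymbol{\pi}^{\ast}$. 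Taking the infimum over $\pi$ identifies $V(t,x;w)=v(t,x;w)$ and shows $\boldsymbol{\pi}^{\ast}$ attains it, yielding (\ref{value_verified}) and (\ref{Gaussian_verified}); existence and uniqueness of the wealth SDE (\ref{wealth_SDE_1}) was already noted.

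I expect the main obstacle to be the integrability bookkeeping needed to make this verification rigorous: the stochastic integral in the It\^o expansion is only a priori a local martingale, so I would introduce a localizing sequence of stopping times, apply the submartingale inequality up to each stopping time, and then pass to the limit. This step requires moment estimates on $X^{\pi}_t$ together with the polynomial growth of $v$ in $x$ to justify the passage via dominated/monotone convergence, and it is precisely here that the admissibility conditions defining $\mathcal{A}(t,x)$ in Appendix \ref{value_function_def} (finite expected entropy and suitable square-integrability) are used; I would also check that $\boldsymbol{\pi}^{\ast}$ itself belongs to $\mathcal{A}(0,x_0)$ so that the equality case is legitimate.

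Finally, to pin down the Lagrange multiplier I would take expectations in the optimal wealth dynamics (\ref{wealth_SDE_1}). The diffusion term is a martingale, so $m(t):=\mathbb{E}[X^{\ast}_t]$ solves the linear ODE $m'(t)=-\rho'\rho\,(m(t)-w)$ with $m(0)=x_0$, giving $m(t)=w+(x_0-w)e^{-\rho'\rho t}$. Imposing the target constraint $m(T)=\mathbb{E}[X^{\ast}_T]=z$ and solving the resulting scalar equation for $w$ yields $w=\frac{ze^{\rho'\rho T}-x_0}{e^{\rho'\rho T}-1}$, completing the proof.
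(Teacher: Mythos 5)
Your proposal is correct and follows essentially the same route as the paper: the paper's own proof consists of exactly this verification argument (which it defers to Theorem 4 of the cited exploratory linear--quadratic work \cite{Hwang}, noting the MV problem is a special case) plus the explicit computation of $w$ from the mean ODE, which you reproduce verbatim. The only cosmetic difference is that you spell out the localization/submartingale details the paper outsources to the reference, and you assert the martingale property of the diffusion term where the paper justifies it via the standard estimate $\mathbb{E}\left[\max_{t\in[0,T]}(X^{*}_t)^2\right]<\infty$ and Fubini's theorem.
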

\begin{proof}
See Appendix \ref{proof1}.
\end{proof}

\smallskip

Theorem \ref{verification} indicates that the level of exploration, measured by the variance of Gaussian policy $\frac{\lambda}{2\sigma^2}e^{\rho^2(T-t)}$, decays in time. The agent initially engages in exploration at the maximum level, and reduces it gradually (although never to zero) as time approaches the end of the investment horizon. Naturally, exploitation dominates exploration as time approaches maturity.  Theorem \ref{verification} presents such a  decaying exploration scheme {\it endogenously} which, to our best knowledge, has not been derived in the RL literature.

Moreover, the mean of the Gaussian distribution (\ref{Gaussian_verified}) 
is independent of the exploration weight $\lambda$, while its variance is independent of the state $x$. This highlights  a \textit{perfect separation} between exploitation and exploration, as the former is captured by the mean and  the latter  by the variance of the optimal Gaussian exploration. This property is also consistent with the linear--quadratic case in the  infinite horizon  studied in \cite{Hwang}.

It is reasonable to expect that the exploratory problem converges to its
classical counterpart as the exploration weight $\lambda$ decreases to 0. Let $\boldsymbol{u}^{\ast}$ be the optimal feedback control for the classical MV problem, and denote by $V^{\text{cl}}$ the optimal value function. Let $\delta_a(\cdot)$ be the Dirac measure centered at $a\in \mathbb{R}^d$. Then the following result holds.

\begin{theorem}\label{convergence_to_Dirac}
For each $(t,x,w)\in [0,T]\times \mathbb{R}\times \mathbb{R}$,
$$\lim_{\lambda \rightarrow 0}\boldsymbol{\pi}^{\ast}(\cdot;t,x;w)=\delta_{\boldsymbol{u}^{\ast}(t,x;w)}(\cdot) \;\;\mbox{ weakly.}$$
Moreover, 
$$\lim_{\lambda \rightarrow 0}|V(t,x;w)-V^{\text{cl}}(t,x;w)|=0.$$

\end{theorem}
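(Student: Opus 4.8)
The plan is to first pin down the two classical objects $\boldsymbol{u}^{\ast}$ and $V^{\text{cl}}$ appearing in the statement, and then establish the two limits separately. For the classical MV problem the value function solves the HJB equation obtained by dropping the entropy term $\lambda\ln\pi(u)$ from (\ref{HJB_2}) and replacing the distributional control by a point control $u\in\mathbb{R}^d$. Minimizing $\frac{1}{2}u'\sigma'\sigma u\,v_{xx}+\rho'\sigma u\,v_x$ over $u$ gives the first-order condition $\sigma'\sigma u\,v_{xx}+\sigma'\rho\,v_x=0$, hence $\boldsymbol{u}^{\ast}=-\sigma^{-1}\rho\,\frac{v_x}{v_{xx}}$; substituting back yields $v_t-\frac{\rho'\rho}{2}\frac{v_x^2}{v_{xx}}=0$ with terminal data $v(T,x;w)=(x-w)^2-(w-z)^2$, whose solution is $V^{\text{cl}}(t,x;w)=(x-w)^2e^{-\rho'\rho(T-t)}-(w-z)^2$ (one checks $v_{xx}>0$ so the minimization is legitimate). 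Differentiating gives $v_x/v_{xx}=x-w$, so $\boldsymbol{u}^{\ast}(t,x;w)=-\sigma^{-1}\rho(x-w)$. The crucial observation is that this coincides exactly with the mean vector of the optimal Gaussian policy (\ref{Gaussian_verified}), which is \emph{independent} of $\lambda$.

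For the weak convergence I would exploit that $\boldsymbol{\pi}^{\ast}(\cdot;t,x,w)$ is Gaussian with fixed mean $\boldsymbol{u}^{\ast}(t,x;w)$ and covariance $(\sigma'\sigma)^{-1}\frac{\lambda}{2}e^{\rho'\rho(T-t)}$, which tends to the zero matrix as $\lambda\to 0$ (all its eigenvalues are $O(\lambda)$ since $\sigma$ is non-degenerate). The cleanest route is via characteristic functions: the characteristic function of $\boldsymbol{\pi}^{\ast}$ at $\theta\in\mathbb{R}^d$ is
$$\exp\!\left(i\theta'\boldsymbol{u}^{\ast}(t,x;w)-\tfrac{\lambda}{4}e^{\rho'\rho(T-t)}\theta'(\sigma'\sigma)^{-1}\theta\right),$$
which converges pointwise in $\theta$ to $\exp(i\theta'\boldsymbol{u}^{\ast}(t,x;w))$, the characteristic function of $\delta_{\boldsymbol{u}^{\ast}(t,x;w)}$. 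Lévy's continuity theorem then delivers the asserted weak convergence. Equivalently, one checks directly that $\int_{\mathbb{R}^d} f(u)\boldsymbol{\pi}^{\ast}(u;t,x,w)\,du\to f(\boldsymbol{u}^{\ast}(t,x;w))$ for every bounded continuous $f$, by concentration of the Gaussian mass around its mean.

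For the value-function limit I would simply subtract the two closed forms. Using (\ref{value_verified}) and the expression for $V^{\text{cl}}$ just derived, the state-dependent part $(x-w)^2e^{-\rho'\rho(T-t)}-(w-z)^2$ cancels, leaving
$$V(t,x;w)-V^{\text{cl}}(t,x;w)=\frac{\lambda d}{4}\rho'\rho\left(T^2-t^2\right)-\frac{\lambda d}{2}\left(\rho'\rho T-\frac{1}{d}\ln\frac{|\sigma'\sigma|}{\pi\lambda}\right)(T-t).$$
Every summand here is $O(\lambda)$ except the contribution $\frac{\lambda}{2}(T-t)\ln\frac{|\sigma'\sigma|}{\pi\lambda}$, which carries a $\lambda\ln\lambda$ factor. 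Since $\lim_{\lambda\to 0}\lambda\ln\lambda=0$, all terms vanish, giving $\lim_{\lambda\to 0}|V-V^{\text{cl}}|=0$.

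The argument is almost entirely bookkeeping once the classical solution is identified, and the matching of the Gaussian mean with $\boldsymbol{u}^{\ast}$ is what makes the first limit essentially automatic. I expect the only genuinely delicate points to be twofold: the $\lambda\ln\lambda$ term arising from the differential entropy of the Gaussian, which is not linear in $\lambda$ and requires the standard limit $\lambda\ln\lambda\to 0$ rather than a crude $O(\lambda)$ bound; and the weak-convergence step, where one must invoke a correct criterion (characteristic functions together with Lévy continuity) rather than compare densities, since the limiting Dirac measure admits no Lebesgue density.
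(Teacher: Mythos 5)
Your proposal is correct and follows essentially the same route as the paper: identify the classical solution $\boldsymbol{u}^{\ast}(t,x;w)=-\sigma^{-1}\rho(x-w)$, $V^{\text{cl}}(t,x;w)=(x-w)^2e^{-\rho'\rho(T-t)}-(w-z)^2$ from the classical HJB equation, observe that the Gaussian policy's mean coincides with $\boldsymbol{u}^{\ast}$ while its covariance is $O(\lambda)$, and compute the explicit value-function difference using $\lim_{\lambda\to 0}\frac{\lambda}{2}\ln\frac{|\sigma'\sigma|}{\pi\lambda}=0$. The only difference is that you make the weak-convergence step rigorous via characteristic functions and L\'evy's continuity theorem, a detail the paper leaves implicit by simply asserting it ``follows from the explicit forms.''
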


\begin{proof}
See Appendix \ref{proof2}.
\end{proof}

\section{RL Algorithm Design}

\subsection{A policy improvement theorem}

We present a policy improvement theorem that is a crucial prerequisite  for our interpretable RL algorithm, the EMV algorithm, which solves the exploratory MV problem in high dimensions.
\begin{theorem}[Policy Improvement Theorem]\label{PIT}
 Let $w\in \mathbb{R}$ be fixed and  $\boldsymbol{\pi}=\boldsymbol{\pi}(\cdot;\cdot,\cdot,w)$ be an arbitrarily given admissible  feedback control policy. Suppose that the corresponding value function $V^{\boldsymbol{\pi}}(\cdot,\cdot;w)\in C^{1,2}([0,T)\times \mathbb{R})\cap C^0([0,T]\times \mathbb{R})$ and satisfies $V^{\boldsymbol{\pi}}_{xx}(t,x;w)>0$, for any $(t,x)\in [0,T)\times \mathbb{R}$. Suppose further that the feedback policy $\tilde{{\boldsymbol{\pi}}}$ defined by
\begin{equation}\label{new_policy}
\tilde{\boldsymbol{\pi}}(u;t,x,w)=\mathcal{N}\left( \, u\, \Big| -\sigma^{-1}\rho\frac{V^{\boldsymbol{\pi}}_x(t,x;w)}{V^{\boldsymbol{\pi}}_{xx}(t,x;w)}\ , \  (\sigma'\sigma)^{-1}\frac{\lambda}{V^{\boldsymbol{\pi}}_{xx}(t,x;w)} \right)
\end{equation}
is admissible. Then, 
\begin{equation}\label{value_improve}
V^{\tilde{{\boldsymbol{\pi}}}}(t,x;w)\leq V^{\boldsymbol{\pi}}(t,x;w),\quad (t,x)\in [0,T]\times \mathbb{R}.
\end{equation}
\end{theorem}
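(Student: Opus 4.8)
The plan is to prove the policy improvement theorem by a verification-style argument built on the Feynman–Kac / Bellman representation of $V^{\boldsymbol{\pi}}$. First I would write down the linear (second-order parabolic) PDE satisfied by $V^{\boldsymbol{\pi}}(t,x;w)$ for the \emph{fixed} policy $\boldsymbol{\pi}$. Because $\boldsymbol{\pi}$ is a distributional (relaxed) control, the generator acting on $V^{\boldsymbol{\pi}}$ is obtained by integrating the classical generator against $\boldsymbol{\pi}_t(u)$, so the equation reads
\begin{equation*}
V^{\boldsymbol{\pi}}_t(t,x;w)+\int_{\mathbb{R}^d}\left(\tfrac{1}{2}u'\sigma'\sigma u\,V^{\boldsymbol{\pi}}_{xx}+\rho'\sigma u\,V^{\boldsymbol{\pi}}_x+\lambda\ln\boldsymbol{\pi}(u;t,x,w)\right)\boldsymbol{\pi}(u;t,x,w)\,du=0,
\end{equation*}
with terminal condition $V^{\boldsymbol{\pi}}(T,x;w)=(x-w)^2-(w-z)^2$. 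This is the ``policy evaluation'' equation, and it is the starting point because the only thing known a priori about $V^{\boldsymbol{\pi}}$ is that it solves this equation and has the stated regularity with $V^{\boldsymbol{\pi}}_{xx}>0$.

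Next I would apply It\^o's formula to the process $s\mapsto V^{\boldsymbol{\pi}}(s,X^{\tilde{\boldsymbol{\pi}}}_s;w)$, where $X^{\tilde{\boldsymbol{\pi}}}$ is the state process driven by the \emph{improved} policy $\tilde{\boldsymbol{\pi}}$ defined in (\ref{new_policy}). Taking expectations (and using a localization/integrability argument to kill the martingale term, justified by the admissibility of $\tilde{\boldsymbol{\pi}}$), integrating from $t$ to $T$, and invoking the terminal condition, I obtain
\begin{equation*}
V^{\tilde{\boldsymbol{\pi}}}(t,x;w)-V^{\boldsymbol{\pi}}(t,x;w)=\mathbb{E}\left[\int_t^T\!\!\int_{\mathbb{R}^d}\!\Big(\tfrac{1}{2}u'\sigma'\sigma u\,V^{\boldsymbol{\pi}}_{xx}+\rho'\sigma u\,V^{\boldsymbol{\pi}}_x+\lambda\ln\tilde{\boldsymbol{\pi}}\Big)\tilde{\boldsymbol{\pi}}\,du\,ds-\int_t^T(\cdots)\boldsymbol{\pi}\,du\,ds\right],
\end{equation*}
where the second integrand is the same Hamiltonian-plus-entropy expression integrated against $\boldsymbol{\pi}$, which vanishes identically by the evaluation PDE above. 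So the sign of $V^{\tilde{\boldsymbol{\pi}}}-V^{\boldsymbol{\pi}}$ is governed entirely by the first bracket.

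The crux is then a \emph{pointwise minimization} observation: for fixed $(t,x)$, the map
$$\pi\mapsto \int_{\mathbb{R}^d}\Big(\tfrac{1}{2}u'\sigma'\sigma u\,V^{\boldsymbol{\pi}}_{xx}(t,x;w)+\rho'\sigma u\,V^{\boldsymbol{\pi}}_x(t,x;w)+\lambda\ln\pi(u)\Big)\pi(u)\,du$$
is exactly the objective already minimized in the HJB derivation (\ref{HJB_2})--(\ref{Gaussian}), and $\tilde{\boldsymbol{\pi}}$ is precisely its Gibbs/Gaussian minimizer when the coefficients $V^{\boldsymbol{\pi}}_{xx}>0,\,V^{\boldsymbol{\pi}}_x$ are frozen. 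Hence this integrand, evaluated at $\tilde{\boldsymbol{\pi}}$, is less than or equal to its value at \emph{any} other density, in particular at $\boldsymbol{\pi}$; but the value at $\boldsymbol{\pi}$ equals $-V^{\boldsymbol{\pi}}_t$ by the evaluation equation, and combining with the HJB-type identity shows the difference of the two time-integrals is nonpositive. Therefore $V^{\tilde{\boldsymbol{\pi}}}(t,x;w)\le V^{\boldsymbol{\pi}}(t,x;w)$, as claimed.

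I expect the \textbf{main obstacle} to be the analytic rigor rather than the algebra: one must justify that $X^{\tilde{\boldsymbol{\pi}}}$ is well-defined with a unique strong solution, that the stochastic integral is a true martingale (so its expectation vanishes) — this requires an integrability/localization argument leaning on the admissibility hypothesis on $\tilde{\boldsymbol{\pi}}$ and the explicit Gaussian form of its mean and variance — and that the growth of $V^{\boldsymbol{\pi}}$ and its derivatives is controlled enough for It\^o's formula and the interchange of expectation and integration to be valid. The pointwise minimization itself is essentially a repeat of the computation already carried out to derive (\ref{Gaussian}), so it should be routine once the representation formula is in hand; the delicate part is verifying the hypotheses under which the It\^o expansion and the vanishing of the martingale term are legitimate.
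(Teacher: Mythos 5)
Your proposal is correct and follows essentially the same route as the paper's proof: It\^o's formula applied to $V^{\boldsymbol{\pi}}$ along the state process generated by $\tilde{\boldsymbol{\pi}}$, the policy-evaluation PDE for $V^{\boldsymbol{\pi}}$, the observation that $\tilde{\boldsymbol{\pi}}$ is the pointwise Gibbs/Gaussian minimizer of the entropy-regularized Hamiltonian, and a localization (stopping-time) plus dominated-convergence argument resting on admissibility. The only difference is bookkeeping: you express $V^{\tilde{\boldsymbol{\pi}}}-V^{\boldsymbol{\pi}}$ as a single expected time-integral and bound its integrand, whereas the paper derives the corresponding inequality on stopped processes and then passes to the limit.
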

\begin{proof}
See Appendix \ref{proof3}.
\end{proof}

\medskip

The above theorem suggests that there are always policies in the Gaussian family
that improves
the value function of any given, not necessarily Gaussian, policy. Moreover, the Gaussian family is closed under the policy improvement scheme. Hence, without loss of generality, we can simply focus on
the Gaussian policies when choosing an initial solution. The next result shows convergence of both the value functions and the policies from a specifically parameterized Gaussian policy.

\begin{theorem}\label{convergence_learning}
Let $\boldsymbol{\pi}_0(u;t,x,w)=\mathcal{N}(u| \alpha(x-w), \Sigma e^{\beta(T-t)})$, with $\alpha\in \mathbb{R}^d$, $\beta\in \mathbb{R}$ and $\Sigma$ being a $d\times d$ positive definite matrix. Denote by $\{\boldsymbol{\pi}_n(u;t,x,w), (t,x)\in [0,T]\times \mathbb{R},n\geq 1\}$ the sequence of feedback policies updated by the policy improvement scheme (\ref{new_policy}), and $\{V^{\boldsymbol{\pi}_n}(t,x;w), (t,x)\in [0,T]\times \mathbb{R}, n\geq 1\}$ the sequence of the corresponding value functions. Then,
\begin{equation}
\lim_{n\rightarrow \infty} \boldsymbol{\pi}_n (\cdot; t,x,w)= \boldsymbol{\pi^*}(\cdot; t,x,w) \;\;\mbox{ weakly,}
\end{equation}
and
\begin{equation}
\lim_{n\rightarrow \infty} V^{\boldsymbol{\pi}_n}(t,x;w)=V(t,x;w),
\end{equation}
for any $(t,x,w)\in [0,T]\times \mathbb{R}\times \mathbb{R}$, where $\boldsymbol{\pi^*}$ and $V$ are the optimal Gaussian policy (\ref{Gaussian_verified}) and the optimal value function (\ref{value_verified}), respectively.
\end{theorem}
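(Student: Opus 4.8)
The plan is to compute the value function of each iterate explicitly by solving the fixed-policy (linear) PDE, exploiting that the subfamily of Gaussian feedback policies whose mean is affine in $(x-w)$ and whose covariance is a function of $t$ alone --- which is exactly the form of $\boldsymbol{\pi}_0$ --- is closed under the improvement map (\ref{new_policy}) and produces value functions that are \emph{exactly} quadratic in $(x-w)$. Within this family the improvement map collapses to a finite-dimensional recursion on the mean slope and the variance schedule, which I can analyze directly rather than through an abstract contraction argument.

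First I would evaluate the state dynamics (\ref{state_process0}) under a generic member of this family, say $\boldsymbol{\pi}(u;t,x,w)=\mathcal{N}(u\,|\,\theta(x-w),\Psi(t))$ with $\theta\in\mathbb{R}^d$ and $\Psi(t)$ positive definite. A direct moment computation gives drift $(\rho'\sigma\theta)(x-w)$ and squared diffusion $(\theta'\sigma'\sigma\theta)(x-w)^2+\mathrm{tr}(\sigma'\sigma\Psi(t))$, while the Gaussian differential entropy contributes a function of $t$ only (through $|\Psi(t)|$). Freezing this policy in (\ref{HJB_2}), i.e.\ dropping the minimization, yields a linear parabolic PDE for $V^{\boldsymbol{\pi}}$ with terminal data $(x-w)^2-(w-z)^2$. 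Substituting the ansatz $V^{\boldsymbol{\pi}}(t,x;w)=A(t)(x-w)^2+C(t)$ and matching powers of $(x-w)$, the linear term vanishes identically and I obtain the linear ODE $A'(t)+(\theta'\sigma'\sigma\theta+2\rho'\sigma\theta)A(t)=0$ with $A(T)=1$, hence $A(t)=e^{(\theta'\sigma'\sigma\theta+2\rho'\sigma\theta)(T-t)}>0$, together with a decoupled ODE for $C(t)$. This shows $V^{\boldsymbol{\pi}}\in C^{1,2}$ with $V^{\boldsymbol{\pi}}_{xx}=2A(t)>0$, so Theorem \ref{PIT} applies to every such iterate; specializing to $(\theta,\Psi)=(\alpha,\Sigma e^{\beta(T-t)})$ disposes of $\boldsymbol{\pi}_0$.

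The key structural observation comes next. Because $V^{\boldsymbol{\pi}}$ is quadratic, $V^{\boldsymbol{\pi}}_x/V^{\boldsymbol{\pi}}_{xx}=(x-w)$ identically, so the improved policy (\ref{new_policy}) has mean exactly $-\sigma^{-1}\rho(x-w)$ --- the optimal mean in (\ref{Gaussian_verified}) --- already after a single iteration, and covariance $(\sigma'\sigma)^{-1}\lambda/(2A(t))$, which is again $x$-independent and of the assumed form. Thus $\boldsymbol{\pi}_n$ carries the optimal mean for all $n\geq 1$. Once $\theta=-\sigma^{-1}\rho$, the exponent above collapses to $\theta'\sigma'\sigma\theta+2\rho'\sigma\theta=\rho'\rho-2\rho'\rho=-\rho'\rho$, so $A(t)=e^{-\rho'\rho(T-t)}$ coincides with one half of $V_{xx}$ of the optimal value function (\ref{value_verified}), independently of the current variance schedule. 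Feeding this into (\ref{new_policy}) once more produces covariance $(\sigma'\sigma)^{-1}\frac{\lambda}{2}e^{\rho'\rho(T-t)}$, exactly the optimal covariance. Hence $\boldsymbol{\pi}_n=\boldsymbol{\pi}^{*}$ and $V^{\boldsymbol{\pi}_n}=V^{\boldsymbol{\pi}^{*}}=V$ for all $n\geq 2$, and both stated limits follow at once; in fact the scheme stabilizes after finitely many steps.

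I expect the main obstacle to be bookkeeping rather than conceptual. One must verify that each iterate remains admissible in the sense of Appendix \ref{value_function_def}, so that $V^{\boldsymbol{\pi}_n}$ is finite and Theorem \ref{PIT} may be legitimately reapplied, and that the probabilistic value function indeed agrees with the explicit quadratic solution of the linear PDE via a Feynman--Kac / verification argument, which requires controlling the integrability (finite second moment) of $X^{\boldsymbol{\pi}_n}_T$. The clean finite-step convergence hinges entirely on the decoupling of the quadratic coefficient $A(t)$ from the variance schedule, a feature special to the affine-mean Gaussian family; I would isolate and state this decoupling carefully, since it is precisely what drives the mean correction at the first iteration and the variance correction at the second.
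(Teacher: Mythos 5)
Your proposal is correct and follows essentially the same route as the paper's proof: compute the value function of the initial Gaussian policy explicitly via the fixed-policy (Feynman--Kac) PDE, apply Theorem \ref{PIT} once to lock in the optimal mean $-\sigma^{-1}\rho(x-w)$, apply it a second time to recover the optimal covariance $(\sigma'\sigma)^{-1}\frac{\lambda}{2}e^{\rho'\rho(T-t)}$, and conclude that the scheme stabilizes at $\boldsymbol{\pi}^*$ and $V$ after two iterations. Your only departure is cosmetic but welcome: by working with a generic affine-mean policy $\mathcal{N}(u\,|\,\theta(x-w),\Psi(t))$ and the quadratic ansatz $A(t)(x-w)^2+C(t)$, you isolate the decoupling of $A(t)$ from the variance schedule that the paper's explicit (and messier) formulas for $V^{\boldsymbol{\pi}_0}$ and $V^{\boldsymbol{\pi}_1}$ exhibit but do not name.
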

\begin{proof}
See Appendix \ref{proof4}.
\end{proof}

\subsection{The EMV algorithm}
We provide the EMV algorithm to directly learn the optimal solution of the continuous-time exploratory MV problem in high dimensions within competitive training time. Theorem \ref{PIT} provides guidance for policy improvement. For the policy evaluation step, we follow \cite{Doya} to minimize the continuous-time Bellman's error
\begin{equation}
\delta_t:=\dot{V}^{\boldsymbol{\pi}}_t+\lambda\int_{\mathbb{R}^d} \pi_t(u)\ln \pi_t(u)du,
\end{equation}
where $\dot{V}^{\boldsymbol{\pi}}_t=\frac{V^{\boldsymbol{\pi}}(t+\Delta t,X_{t+\Delta t})-V^{\boldsymbol{\pi}}(t,X_t)}{\Delta t}$ is the total derivative and $\Delta t$ is the discretization step for the learning algorithm. This leads to the cost function to be minimized
\begin{equation}\label{learning_objective}
C(\theta,\phi)=\frac{1}{2}\sum_{(t_i,x_i)\in \mathcal{D}}\left(\dot{V}^{\theta}(t_i,x_i)+\lambda\int_{\mathbb{R}^d} \pi^{\phi}_{t_i}(u)\ln \pi^{\phi}_{t_i}(u)du\right)^2 \Delta t,
\end{equation}
using samples collected in the set $\mathcal{D}$ under the current Gaussian policy $\boldsymbol{\pi}^{\phi}$. Here, both the value function $V^{\theta}$ and the Gaussian policy $\pi^{\phi}$ can be  parametrized more explicitly, in view of (\ref{value_verified}), Theorem \ref{PIT} and \ref{convergence_learning}. The cost function (\ref{learning_objective}) can then be minimized by stochastic gradient decent. Finally, the EMV algorithm updates the Lagrange multiplier $w$ every $N$ iterations based on stochastic approximation and the constraint $\mathbb{E}[X^{\pi}_T]=z$, namely, $w\leftarrow w-\alpha(\frac{1}{N}\sum_j x^j_T-z)$, where $x_T^j$'s are the most recent $N$ terminal wealth values. We refer the readers to \cite{hwang_2} for a more detailed description of the EMV algorithm in the one risky asset scenario.

\section{Empirical Results}
\subsection{Data and methods}
We test the EMV algorithm on price data of the S\&P $500$ stocks for both monthly and daily trading. For the former, we train the EMV algorithm on the $10$ years monthly data\footnote{All data is from Wharton Research Data Services (WRDS). https://wrds-web.wharton.upenn.edu/wrds/} from 08-31-1990 to 08-31-2000, and then test the learned allocation strategy from 09-29-2000 to 09-30-2010. The initial wealth is normalized as $1$ and the $10$ years target is $z=8$, corresponding to a $23\%$ annualized target return. In the daily rebalancing scenario, the EMV algorithm is trained on the $1$ year daily data from 01-09-2017 to 01-08-2018 and tested on the subsequent year, with a $40\%$ return set as the target for the $1$ year investment horizon.

For comparison studies, we also train and test other alternative methods for solving the portfolio allocation problem on the same data. Specifically, we consider the classical econometric methods including Black-Litterman (BL, \cite{black}), Fama-French (FF, \cite{fama}) and the Markowitz portfolio (Markowitz, \cite{M}). A recently developed distributionally robust MV strategy, the Robust Wasserstein Profile Inference (RWPI, \cite{robust_MV}), is also included. To compare EMV with deep RL method,  we adjust DDPG similarly as in \cite{hwang_2}, so that it can solve the classical MV problem (\ref{unconstrained_classical}). All experiments were performed on a MacBook Air laptop, with DDPG trained using Tensorflow.
\subsection{Test I: monthly rebalancing}
We first consider $d=20$. By randomly selecting $20$ stocks for each set/seed, we compose $100$ different seeds. The split of training and testing data for  EMV and DDPG is fixed as described above, but we consider two types of training. The first training method is batch (off-line) RL, where both algorithms are trained for multiple episodes using one seed, following by testing on the subsequent $10$ years data of that seed. The performance is then averaged over the $100$ seeds. Another method is to use all the $100$ seeds and select one seed randomly for each episode during training. Then both algorithms are tested on randomly selected $100$ seeds over the test period and the performance is averaged as well. The second method can be seen to artificially generate randomness for training and testing, and an algorithm that performs well using this method has universality and potential to generate to data of stocks in different sectors.

For competitive performance, we adopt a rolling-horizon based training and testing for all the other methods. Specifically, each time after the $1$ month ahead investment decision is made on the test set, we add the most recent price data point from the test set into the training set, and discard the most obsolete data point from the training set. 
\begin{figure}
\centering
  \begin{subfigure}[b]{0.4\textwidth}
    \includegraphics[width=\textwidth]{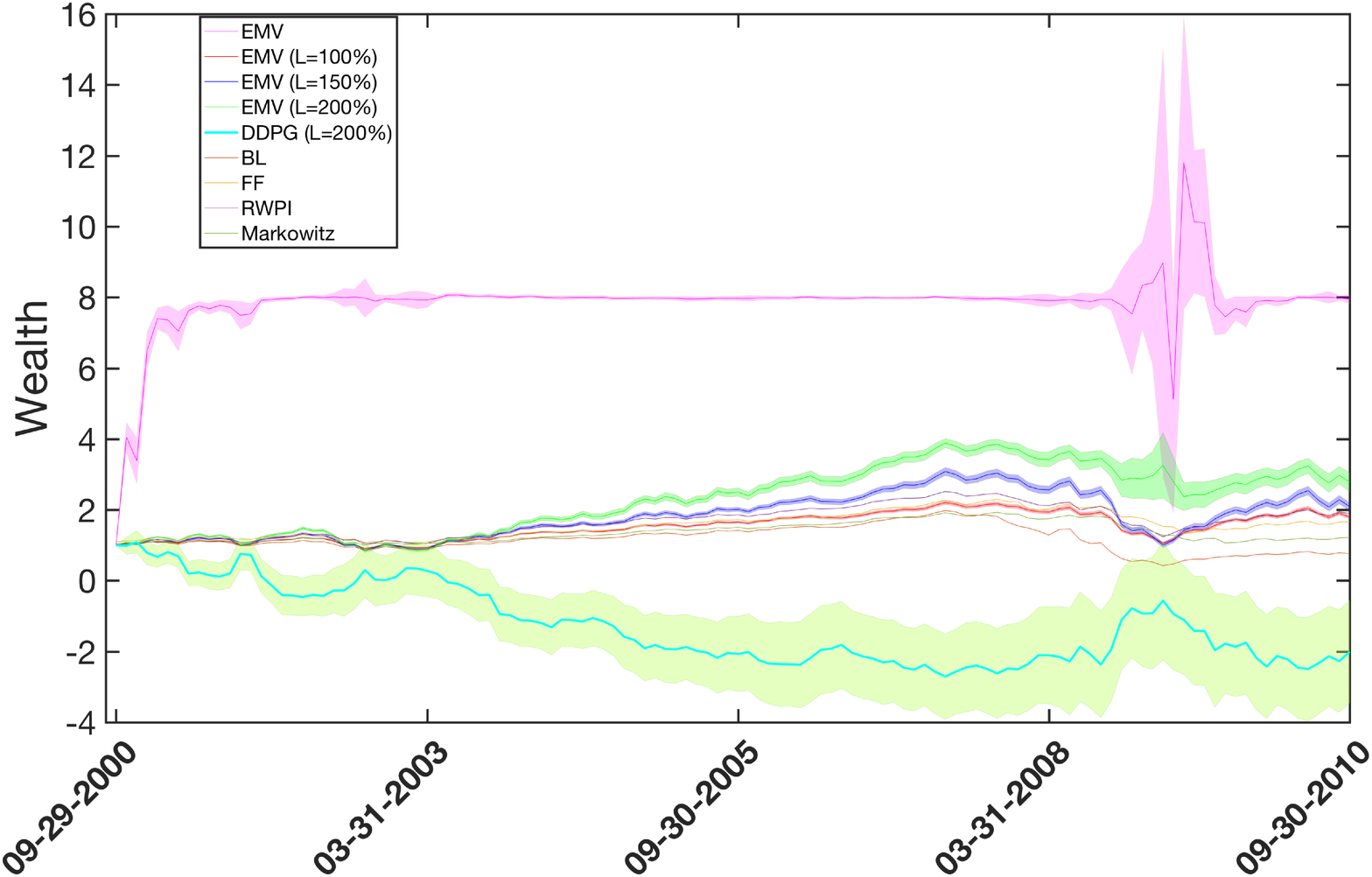}
    \caption{Monthly rebalancing ($d=20$)}
    \label{fig_monthly}
  \end{subfigure}
  \begin{subfigure}[b]{0.4\textwidth}
    \includegraphics[width=\textwidth]{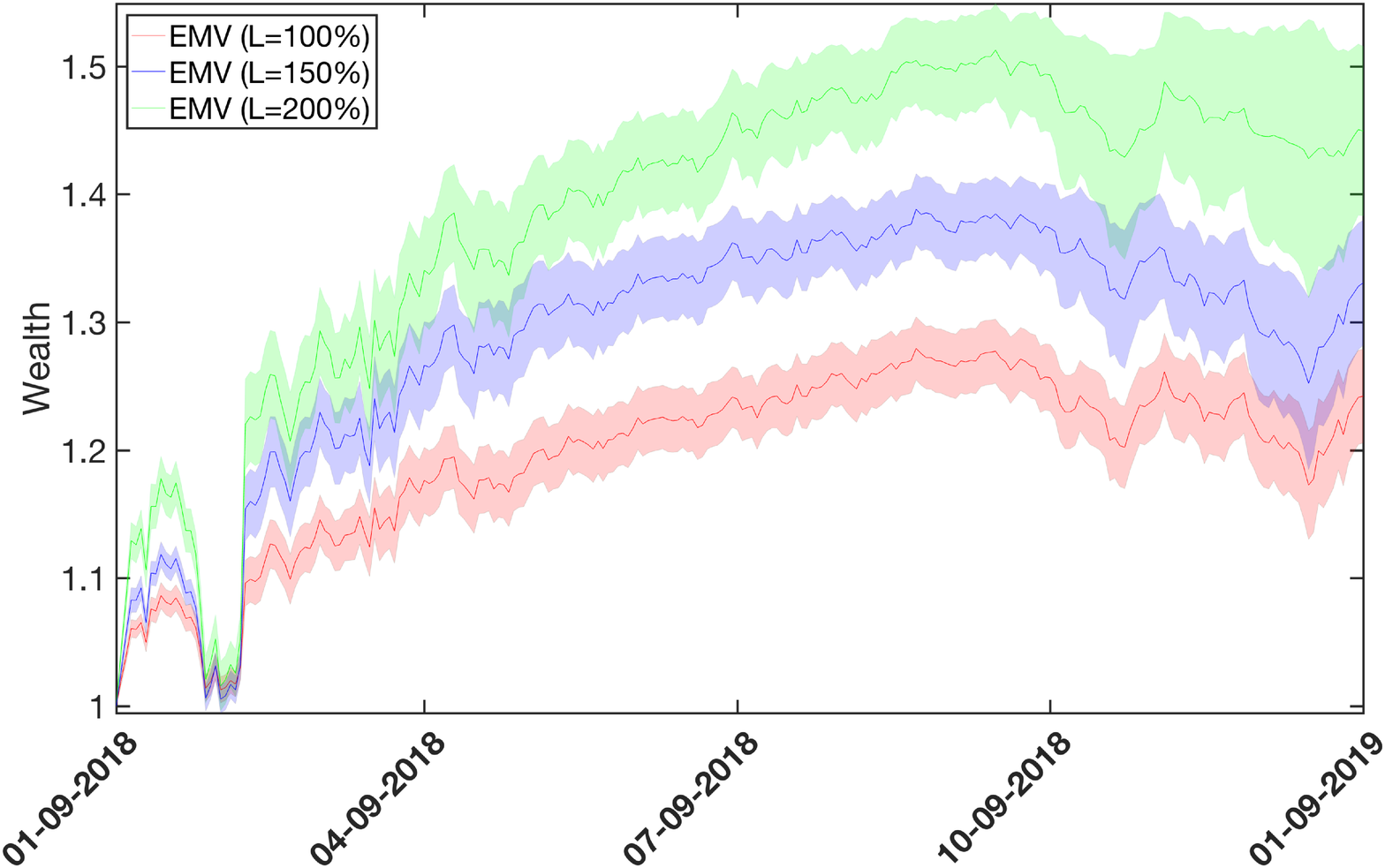}
    \caption{Daily rebalancing ($d=50$)}
    \label{fig_daily}
  \end{subfigure}
  \caption{Investment performance comparison for (a) $10$ years horizon with monthly rebalancing and (b) $1$ year horizon with daily rebalancing.}
\end{figure}
Figure \ref{fig_monthly} shows the performance of various investment strategies, including variants of the EMV algorithm with different {\it gross} leverage constraints on portfolios.\footnote{Leverage is a fundamental investment tool for most hedge funds; according to \cite{leverage}, the average gross leverage across the $208$ hedge funds studied therein is $213\%$.} Under reasonable leverage constraint, the EMV algorithm still outperforms most other methods (which have no constraints, except DDPG) by a large margin, although it was trained only using the previous $10$ years monthly data. 

The universal training and testing method was used for EMV and DDPG in Figure \ref{fig_monthly}. Results for the batch method can be found in Appendix \ref{batch}. A remarkable fact in both cases is that the original EMV algorithm, devised to solve the exploratory MV problem (\ref{value_function}) without constraint, achieves the target $z=8$ with minimal variance for most of the test period. We also report various investment outcomes in Table \ref{monthly_table} when scaling up $d$, the number of stocks in the portfolio.

\subsection{Test II: daily rebalancing} 
For daily trading with $d=50$, we report the performance of the EMV algorithm under different gross leverage constraints in Figure \ref{fig_daily}. The DDPG algorithm was not competitive in the daily trading setting (see Table \ref{daily_table}) and, hence, omitted. For different $d$, Table \ref{daily_table} summarizes the investment outcomes and the training time (per experiment). These results were obtained using the universal method for both training and testing.
\begin{table}
\scriptsize
\begin{subtable}[t]{\textwidth}\centering
\subcaption{\footnotesize Monthly rebalancing}
\begin{tabular}{@{}l cc cc cc} 
\toprule%
 & \multicolumn{2}{c}{{{\bfseries $d=20$}}}
 & \multicolumn{2}{c}{{{\bfseries  $d=60$}}}
 & \multicolumn{2}{c}{{{\bfseries $d=100$}}} \\

\cmidrule[0.4pt](r{0.155em}){1-1}%
\cmidrule[0.4pt](lr{0.125em}){2-3}%
\cmidrule[0.4pt](lr{0.125em}){4-5}%
\cmidrule[0.4pt](l{0.25em}){6-7}%

EMV (L=$200\%$) & $10.8\%$ & $0.31$ hrs & $11.2\%$ &
$4.34$ hrs & $6.3\%$ & $1.53$ hrs \textsuperscript{\ref{note1}}\\

& ($0.797$) &  & ($1.323$ ) &  & ($1.627$) & \\
[0.3cm]
DDPG (L=$200\%$) & $-300.1\%$  & $4.23$ hrs &
$476.3\%$  & $5.32$ hrs
& $-653.4\%$ & $6.68$ hrs  \\

\ \ (Unannualized) & ($-0.411$) &  & ($0.359$ ) &  & ($-0.432$) & \\

\bottomrule
\end{tabular}

\label{monthly_table}
\end{subtable}
\hspace{\fill}

\begin{subtable}[t]{\textwidth}\centering
\subcaption{\footnotesize Daily rebalancing}
\begin{tabular}{@{}l cc cc cc} 
\toprule%
 & \multicolumn{2}{c}{{{\bfseries $d=50$}}}
 & \multicolumn{2}{c}{{{\bfseries  $d=75$}}}
 & \multicolumn{2}{c}{{{\bfseries $d=100$}}} \\

\cmidrule[0.4pt](r{0.155em}){1-1}%
\cmidrule[0.4pt](lr{0.125em}){2-3}%
\cmidrule[0.4pt](lr{0.125em}){4-5}%
\cmidrule[0.4pt](l{0.25em}){6-7}%

EMV (L=$200\%$) & $44.9\%$ & $1.36$ hrs & $33.0\%$ &
$5.54$ hrs & $17.9\%$ & $1.63$ hrs\footnote{\label{note1}Only $1000$ episodes were trained, while the training episodes for other experiments were $20000$.}  \\

& ($1.347$) &  & ($1.370$) &  & ($1.124$) & \\
[0.3cm]
DDPG (L=$200\%$) & $-189.6\%$  & $6.20$ hrs &
$-27.9\%$  & $8.45$ hrs
& $-640.6\%$ & $14.42$ hrs  \\

& ($-0.096$) &  & ($-0.012$) &  & ($-0.219$) & \\

\bottomrule
\end{tabular}
\label{daily_table}
\end{subtable}
\caption{Annualized return (with Sharpe ratio) and training time corresponding to different $d$ for \\ (a) $10$ years horizon with monthly rebalancing and (b) $1$ year horizon with daily rebalancing.}
\end{table}

\section{Related Work}
The difficulty of seeking the global optimum for Markov Decision Process (MDP) problems under the MV criterion has been previously noted in \cite{MT}.
In fact, the variance of reward-to-go is nonlinear in expectation and, as a result of Bellman's inconsistency, most of the well-known RL algorithms cannot be applied directly.

Existing works on variance estimation and control generally divide into value based methods and policy based methods.  \cite{Sobel} obtained the Bellman's equation for the variance of reward-to-go under a {\it fixed}, given policy. \cite{SKK} further derived the TD(0) learning rule to estimate the variance, followed by \cite{SK} which applied this value based method  to an MV  portfolio selection problem. It is worth noting that due to the definition of the value function (i.e., the variance penalized expected reward-to-go) in \cite{SK},  Bellman's optimality principle does not hold. As a result, it is not guaranteed that a greedy policy based on the latest updated value function will eventually lead to the true global optimal policy. The second approach, the policy based RL,  was proposed  in \cite{TDM}. They also extended the work to linear function approximators and devised actor-critic algorithms for  MV optimization problems for which convergence to the local optimum is guaranteed with probability one (\cite{TM}). Related works following this line of research include \cite{MV_2}, \cite{MV_1}, among others. Despite the various methods mentioned above, it remains  an open and interesting question in RL to search for the {\it global} optimum under the MV criterion.

In this paper, rather than relying on the typical framework of discrete-time MDP and discretizing time and state/action spaces accordingly, we designed the EMV algorithm to learn the {\it global} optimal solution of the continuous-time exploratory MV problem (\ref{value_function}) {\it directly}. As pointed out in \cite{Doya}, it is typically challenging to find the right granularity to discretize the state and action spaces, and
naive discretization  may lead to poor performance. On the other hand, grid-based discretization methods for solving the HJB equation cannot easily extend to high dimensions in practice due to the curse of dimensionality, although theoretical convergence results have been established (see \cite{Munos_1}, \cite{Munos_2}). Our EMV algorithm, however, is computationally feasible and implementable in high dimensions, as demonstrated by the experiements, due to the explicit representations of the value functions and the portfolio strategies, thereby devoid of  the curse of dimensionality. Note that our algorithm does not use (deep) neural networks, which have been applied extensively in literature for (high-dimensional) continuous RL problems (e.g., \cite{Lillicrap}, \cite{DQN}) but known for unstable performance, sample inefficiency as well as extensive hyperparameter tuning (\cite{DQN}, \cite{Deep_RL_1}, \cite{Deep_RL_2}), in addition to their low interpretability.\footnote{Interpretability is one of the most important and pressing issues in the general artificial intelligence applications in financial industry due to, among others,
the regulatory requirement.}

\section{Conclusions}
We studied continuous-time mean-variance (MV) portfolio allocation problem in high dimensions using RL methods. Under the exploratory control framework for general continuous-time optimization problems, we formulated the exploratory MV problem in high dimensions and proved the optimality of Gaussian policy in achieving the best tradeoff between exploration and exploitation. Our EMV algorithm, designed by combining quantitative finance analysis and RL techniques to solve the exploratory MV problem, is interpretable, scalable and data efficient, thanks to a provable policy improvement theorem and efficient functional approximations based on the theoretical optimal solutions. It consistently outperforms both classical model-based econometric methods and model-free deep RL method, across different training and testing scenarios. Interesting future research includes testing the EMV algorithm for shorter trading horizons with tick data (e.g. high frequency trading), or for trading other financial instruments such as mean-variance option hedging.

\subsubsection*{Acknowledgments}
The author would like to thank Prof. Xun Yu Zhou for generous support and continuing encouragement on this work. The author also wants to thank Lin (Charles) Chen for providing the results on BL, FF, Markowitz and RWPI methods.

\bibliography{bibtex_2}

\newpage
\appendix

\section{Controlled Wealth Dynamics}\label{wealth_appendix}
Let $W_t=(W^1_t,\dots, W^d_t)$, $0\leq t\leq T$ be a standard $d$-dimensional Brownian motion defined on a filtered probability space $(\Omega, \mathcal{F},\{\mathcal{F}_{t}\}_{0\leq t\leq T},\mathbb{P})$ that satisfies the usual conditions. The price process of the $i$-th risky asset is a geometric Brownian motion governed by
\begin{equation}\label{price}
dS^{i}_t=S^{i}_t\left(\mu^i\, dt+\sigma^{i} \cdot dW_t\right), \quad 0\leq t\leq T, \quad  i=1,\dots,d,
\end{equation}
with $S^i_0=s^i_0>0$ being the initial price at $t=0$, and $\mu^i\in \mathbb{R}$, $\sigma^i=(\sigma^{1i},\dots, \sigma^{di})\in \mathbb{R}^d$ being the mean return and volatility coefficients of the $i$-th risky asset, respectively. We denote for brevity the mean return vector by $\mu\in \mathbb{R}^d$, and the volatility matrix by $\sigma\in \mathbb{R}^{d\times d}$, whose $i$-th column represents the volatility $\sigma^i$ of the $i$-th risky asset. The riskless asset has a constant interest rate $r>0$. We assume that $\sigma$ is non-degenerate and hence there exists a $d$-dimensional vector $\rho$ that satisfies $\sigma'\rho=\mu-r\mathbf{1}$, where $\mathbf{1}$ is the $d$-dimensional vector with all components being $1$. The vector $\rho$ is known as the market price of risk. It is worth noting that the above assumptions are only made for the convenience of deriving theoretical results in the paper; in practice, all the model parameters are unknown and time-varying, and it is the goal of RL algorithms to directly output trading strategies without relying on estimation of any underlying parameters.

Denote by $u^0_t$ and $u_t=(u_t^1,\dots,u_t^d)$  the discounted dollar value put in the savings account and the $d$ risky assets, respectively, at time $t$. It then follows that the discounted wealth process is $x^u_t=\sum_{i=0}^d u^i_t$, $0\leq t\leq T$. The self-financing condition further implies that, using (\ref{price}), we have
$$dx^u_t =  ru^0_t dt+\sum_{i=1}^d\frac{u^i_t}{S^i_t}dS^i_t-rx^u_t dt=-r(x^u_t-u^0_t)dt+\sum_{i=1}^d u^i_t\left(\mu^i dt+\sigma^i\cdot dW_t\right)$$
$$=\sum_{i=1}^d u_t^i\left((\mu^i-r)dt+\sigma^i\cdot dW_t\right)=\sigma u_t\cdot (\rho dt+dW_t).$$
\section{Value Functions and Admissible Control Distributions}\label{value_function_def}
In order to rigorously solve (\ref{value_function}) by dynamic programming, we need to define the value functions. For each $(s,y)\in[0,T)\times \mathbb{R}$, consider the
state equation (\ref{state_process0}) on $[s,T]$ with $X^{\pi}_s=y$.
Define the set of admissible controls, $\mathcal{A}(s,y)$, as follows. Let $\mathcal{B}(\mathbb{R}^d)$ be the Borel algebra on $\mathbb{R}^d$. A (distributional) control (or portfolio/strategy) process $\pi=\{\pi_t,s\leq t\leq T\}$ belongs to $\mathcal{A}(s,y)$, if

\smallskip

(i)		\ for each $s\leq t\leq T$, $\pi _{t}\in \mathcal{P}(\mathbb{R}^d)$ a.s.;

(ii)		for each $A\in \mathcal{B}(\mathbb{R}^d)$, $\{\int_A\pi _{t}(u)du,s\leq t\leq T\} $
is $\mathcal{F}_{t}$-progressively measurable;

(iii)	$\mathbb{E}\left[
\int_{s}^{T}\int_{\mathbb{R}^d}\big|\sigma u\big|^2\, \pi_t(u)du dt\right] <\infty$;

(iv)	$\mathbb{E}\left[\big|(X_T^{\pi}-w)^2+\lambda \int_s^T\int_{\mathbb{R}^d}\pi_t(u)\ln \pi_t(u)dudt\big|\; \Big | X_s^{\pi}=y\right]<\infty$.

\medskip

Clearly, it follows from condition (iii) that
the stochastic differential equation (SDE) (\ref{state_process0}) has a unique strong solution for $s\leq t\leq T$ that satisfies $X^{\pi}_s=y$.

Controls in $\mathcal{A}(s,y)$ are measure-valued (or, precisely, density-function-valued) stochastic {\it processes}, which are also called {\it open-loop} controls in the control terminology.
As in the classical control theory, it is important to distinguish between open-loop controls and {\it feedback} (or {\it closed-loop}) controls (or {\it policies} as in the RL literature, or {\it laws} as in the control literature). Specifically, a {\it deterministic} mapping $\boldsymbol{\pi}(\cdot;\cdot,\cdot)$ is called an (admissible)  feedback control  if i) $\boldsymbol{\pi}(\cdot;t,x)$ is a density function for each $(t,x)\in[0,T]\times \mathbb{R}$; ii) for each $(s,y)\in[0,T)\times \mathbb{R}$, the following SDE (which is the system dynamics after the feedback policy $\boldsymbol{\pi}(\cdot;\cdot,\cdot)$ is applied)
\begin{equation}\label{new_dynamics_feedback}
dX^{\boldsymbol{\pi}}_t=\left(\int_{\mathbb{R}^d} \rho' \sigma u\boldsymbol{\pi}(u;t,X^{\boldsymbol{\pi}}_t))du\right)\, dt+\left({\int_{\mathbb{R}^d} u'\sigma'\sigma u\boldsymbol{\pi}(u;t,X^{\boldsymbol{\pi}}_t))du}\right)^{\frac{1}{2}}\, dB_t,\;  X^{\boldsymbol{\pi}}_{s}=y,
\end{equation}
has a unique strong solution $\{X^{\boldsymbol{\pi}}_t,t\in[s,T]\}$,  and  the open-loop control
$\pi=\{\pi
_{t},$ $t\in[s,T]\}\in \mathcal{A}(s,y)$ where $\pi_{t}:=\boldsymbol{\pi}(\cdot;t,X_t^{\boldsymbol{\pi}})$. In this case, the open-loop control $\pi$ is said to be
{\it generated} from the feedback policy $\boldsymbol{\pi}(\cdot;\cdot,\cdot)$ {\it with respect to} the initial time and state,  $(s,y)$. It is useful to note that an open-loop control and its admissibility depend on the initial $(s,y)$, whereas a
feedback policy can generate open-loop controls for {\it any} $(s,y)\in[0,T)\times \mathbb{R}$, and hence is in itself independent of $(s,y)$. Note that throughout this paper, we have used boldfaced $\boldsymbol{\pi}$ to denote feedback controls, and the normal style $\pi$ to denote open-loop controls.

Now, for a fixed $w\in \mathbb{R}$, define
\begin{equation}\label{value_function_general}
V(s,y;w):=\inf_{\pi\in \mathcal{A}(s,y)}\mathbb{E}\left[(X_T^{\pi}-w)^2+\lambda \int_0^T\int_{\mathbb{R}^d}\pi_t(u)\ln \pi_t(u)dudt\Big | X_s^{\pi}=y\right]-(w-z)^2,
\end{equation}
for $(s,y)\in[0,T)\times \mathbb{R}$.
The function $V(\cdot,\cdot;w)$ is called the {\it optimal  value function} of the problem.

Moreover, we define  the {\it value function}  under any given {\it feedback} control  $\boldsymbol{\pi}$:

\begin{equation}\label{general_value}
V^{\boldsymbol{\pi}}(s,y ;w)=\mathbb{E}\left[(X_T^{\boldsymbol{\pi}}-w)^2+\lambda \int_s^T\int_{\mathbb{R}^d}\pi_t(u)\ln \pi_t(u)dudt\Big | X_s^{\boldsymbol{\pi}}=y\right]-(w-z)^2,
\end{equation}
for $(s,y)\in[0,T)\times \mathbb{R}$, where $\pi=\{\pi
_{t},$ $t\in[s,T]\}$ is  the open-loop control generated from $\boldsymbol{\pi}$  with respect to  $(s,y)$ and $\{X^{\boldsymbol{\pi}}_t,t\in[s,T]\}$ is the corresponding wealth process.

Note that in the control literature, $V$ given by (\ref{value_function_general}) is called the value function.
However, in the RL literature the term ``value function" is also used for the objective value under a particular control (i.e. $V^{\boldsymbol{\pi}}$ in (\ref{general_value})). So to avoid ambiguity we have called $V$ the {\it optimal} value function in this paper.

\section{Proofs}
\subsection{Proof of Theorem \ref{verification}}\label{proof1}

The main proof of Theroem \ref{verification} would be the verification arguments that aim to show the optimal value function of problem (\ref{value_function}) is given by (\ref{value_verified}) and that the candidate optimal policy (\ref{Gaussian_verified}) is indeed admissible, based on the definitions in Appendix \ref{value_function_def}. Since the current exploratory MV problem is a special case of the exploratory linear-quadratic problem extensively studied in \cite{Hwang}, a detailed proof would follow the same lines of that of Theorem $4$ therein, and is left for interested readers.

\smallskip

\begin{proof}
We now determine the Lagrange multiplier $w$ through the constraint $\mathbb{E}[X^*_T]=z$. It follows  from (\ref{wealth_SDE_1}),
along with the standard estimate that $\mathbb{E}\left[\max_{t\in [0,T]}(X^*_t)^2\right]<\infty$ and Fubini's Theorem, that
$$\mathbb{E}[X^*_t]=x_0+\mathbb{E}\left[\int_0^t -\rho'\rho(X^*_s-w)\, ds\right]=x_0+\int_0^t -\rho'\rho\left(\mathbb{E}[X^*_s]-w\right)\, ds.$$
Hence,  $\mathbb{E}[X^*_t]=(x_0-w)e^{-\rho'\rho t}+w$. The constraint $\mathbb{E}[X^*_T]=z$ now becomes $(x_0-w)e^{-\rho'\rho T}+w=z$, which gives  $w=\frac{ze^{\rho'\rho T}-x_0}{e^{\rho'\rho T}-1}$.
\end{proof}

\subsection{Proof of Theorem \ref{convergence_to_Dirac}}\label{proof2}
To prove the solution of the exploratory MV problem converges to that of the classical MV problem, as $\lambda\rightarrow 0$, we first recall the solution of the classical MV problem. 

In order to apply dynamic programming for (\ref{unconstrained_classical}), we again consider the set of admissible controls, $\mathcal{A}^{\text{cl}}(s,y)$,
for $(s,y)\in [0,T)\times \mathbb{R}$,
\begin{center}
$\mathcal{A}^{\text{cl}}(s,y):=\Big\{u=\{u_t, t\in [s,T]\}$: $u$ is $\mathcal{F}_t$-progressively measurable and $\mathbb{E}\left[\int_s^T\big|\sigma u_t\big|^2\, dt\right]<\infty\Big\}.$
\end{center}
The (optimal) value function is defined by
\begin{equation}\label{classical_value_function}
V^{\text{cl}}(s,y;w):=\inf_{u\in \mathcal{A}^{\text{cl}}(s,y)}\mathbb{E}\big [(x^u_T-w)^2\, \big | \, x^u_s=y\big ]-(w-z)^2,
\end{equation}
for $(s,y)\in [0,T)\times \mathbb{R}$, where $w\in \mathbb{R}$ is fixed. Once this problem is solved, $w$ can be determined by the constraint $\mathbb{E}[x^*_T]=z$, with $\{x^*_t, t\in [0,T]\}$ being the optimal wealth process under the optimal portfolio $u^*$.

The HJB equation is
\begin{equation}\label{HJB_classical}
\omega_t(t,x;w)+\min_{u\in \mathbb{R}^d}\left(\frac{1}{2}u'\sigma'\sigma u\,\omega_{xx}(t,x;w)+\rho' \sigma u \, \omega_x(t,x;w)\right)=0,\;\;(t,x)\in [0,T)\times \mathbb{R},
\end{equation}
with the terminal condition  $\omega(T,x;w)=(x-w)^2-(w-z)^2$.

Standard verification arguments  deduce the optimal value function to be
\begin{equation}\label{classical_value_verified}
V^{\text{cl}}(t,x;w)=(x-w)^2e^{-\rho'\rho (T-t)}-(w-z)^2,
\end{equation}
the optimal feedback control policy to be
\begin{equation}\label{classical_optimal_control}
\boldsymbol{u}^{\ast }(t,x;w)=-{\sigma}^{-1}\rho(x-w),
\end{equation}
and the corresponding optimal wealth process to be the unique strong solution to the SDE
\begin{equation}\label{classical_optimal_wealth}
dx^*_t=-\rho'\rho(x^*_t-w)\, dt-\rho (x^*_t-w)\cdot dW_t,\quad x^*_0=x_0.
\end{equation}
Comparing the optimal wealth dynamics, (\ref{wealth_SDE_1}) and (\ref{classical_optimal_wealth}), of the exploratory and classical problems,
we note that they have the same {\it drift} coefficient (but different {\it diffusion} coefficients). As a result, the two problems have the same
mean of optimal terminal wealth and hence the same value of
the Lagrange multiplier $w=\frac{ze^{\rho'\rho T}-x_0}{e^{\rho'\rho T}-1}$ determined by the constraint $\mathbb{E}[x^*_T]=z$.

\smallskip

\begin{proof}
The weak convergence of the feedback controls follows from the explicit forms of $\boldsymbol{\pi}^{\ast}$  in (\ref{Gaussian_verified}) and $\boldsymbol{u}^{\ast}$ in (\ref{classical_optimal_control}). The pointwise convergence of the value functions follows easily from the forms of $V$ in (\ref{value_verified}) and $%
V^{\text{cl}}$ in (\ref{classical_value_verified}), together with the fact that
\[ \lim_{\lambda \rightarrow 0}\frac{\lambda}{2}\ln\frac{|\sigma'\sigma|}{\pi \lambda} =0.
\]
\end{proof}

\subsection{Proof of Theorem \ref{PIT}}\label{proof3}
\begin{proof}
Fix $(t,x)\in [0,T]\times \mathbb{R}$. Since, by assumption, the feedback policy $\tilde{{\boldsymbol{\pi}}}$ is admissible,
the open-loop control strategy, $\tilde{\pi}=\{\tilde{\pi}_v, v\in [t,T]\}$, generated from $\tilde{{\boldsymbol{\pi}}}$ with respect to the initial condition ${X}^{\tilde{{\boldsymbol{\pi}}}}_t=x$ is  admissible. Let $\{{X}^{\tilde{{\boldsymbol{\pi}}}}_s,s\in[t,T]\}$ be the corresponding wealth process under $\tilde{\pi}$. Applying It\^{o}'s formula, we have
$$V^{\boldsymbol{\pi}}(s,\tilde{X}_s)=V^{\boldsymbol{\pi}}(t,x)+\int_t^sV^{\boldsymbol{\pi}}_t(v,{X}^{\tilde{{\boldsymbol{\pi}}}}_v)dv+\int_t^s\int_{\mathbb{R}^d}\Big(\frac{1}{2}u'\sigma'\sigma u V^{\boldsymbol{\pi}}_{xx}(v,{X}^{\tilde{{\boldsymbol{\pi}}}}_v)$$
\begin{equation}\label{Ito}
+\rho' \sigma u V^{\boldsymbol{\pi}}_x(v,{X}^{\tilde{{\boldsymbol{\pi}}}}_v)\Big)\tilde{\pi}_v(u) \, dudv+\int_t^s  \left(\int_{\mathbb{R}^d}u'\sigma'\sigma u\tilde{\pi}_v(u)du\right)^{\frac{1}{2}}V^{\boldsymbol{\pi}}(v,{X}^{\tilde{{\boldsymbol{\pi}}}}_v)\, dB_v,\;s\in [t,T].
\end{equation}
Define the stopping times $\tau_n:=\inf\{s\geq t: \int_t^s \int_{\mathbb{R}^d}u'\sigma'\sigma u\,\tilde{\pi}_v(u)du\left(V^{\boldsymbol{\pi}}(v,{X}^{\tilde{{\boldsymbol{\pi}}}}_v)\right)^2dv\geq n\}$, for $n\geq 1$. Then, from (\ref{Ito}), we obtain
$$V^{\boldsymbol{\pi}}(t,x)=\mathbb{E}\Big[V^{\boldsymbol{\pi}}(s\wedge\tau_n,{X}^{\tilde{{\boldsymbol{\pi}}}}_{s\wedge \tau_n})-\int_t^{s\wedge\tau_n}V^{\boldsymbol{\pi}}_t(v,{X}^{\tilde{{\boldsymbol{\pi}}}}_v)dv$$
\begin{equation}\label{stopped}
-\int_t^{s\wedge\tau_n}\int_{\mathbb{R}^d}\Big(\frac{1}{2}u'\sigma'\sigma u \,V^{\boldsymbol{\pi}}_{xx}(v,{X}^{\tilde{{\boldsymbol{\pi}}}}_v)+\rho' \sigma u V^{\boldsymbol{\pi}}_x(v,{X}^{\tilde{{\boldsymbol{\pi}}}}_v)\Big)\tilde{\pi}_v (u)\, dudv\, \Big| {X}^{\tilde{{\boldsymbol{\pi}}}}_t=x\Big].
\end{equation}
On the other hand, by standard arguments and the assumption that $V^{\boldsymbol{\pi}}$ is smooth, we have
$$V_t^{\boldsymbol{\pi}}(t,x)+\int_{\mathbb{R}^d}\left(\frac{1}{2}u'\sigma'\sigma u\, V^{\boldsymbol{\pi}}_{xx}(t,x)+\rho'\sigma u V^{\boldsymbol{\pi}}_x(t,x)+\lambda\ln\boldsymbol{\pi}(u;t,x)\right)\boldsymbol{\pi}(u;t,x)du=0,$$
for any $(t,x)\in [0,T)\times \mathbb{R}$. It follows that
\begin{equation}\label{minimization}
V_t^{\boldsymbol{\pi}}(t,x)+\min_{\hat{\pi}\in\mathcal{P}(\mathbb{R}^d)}\int_{\mathbb{R}^d}\left(\frac{1}{2}u'\sigma'\sigma u\, V^{\boldsymbol{\pi}}_{xx}(t,x)+\rho'\sigma u V^{\boldsymbol{\pi}}_x(t,x)+\lambda\ln\hat{\pi}(u)\right)\hat{\pi}(u)du\leq 0.
\end{equation}
Notice that the minimizer of the Hamiltonian in (\ref{minimization}) is given by the feedback policy $\tilde{\boldsymbol{\pi}}$ in (\ref{new_policy}). It then follows that equation (\ref{stopped}) implies
$$V^{\boldsymbol{\pi}}(t,x)\geq\mathbb{E}\Big[V^{\boldsymbol{\pi}}(s\wedge\tau_n,{X}^{\tilde{{\boldsymbol{\pi}}}}_{s\wedge \tau_n})+\lambda \int_t^{s\wedge\tau_n}\int_{\mathbb{R}^d}\tilde{\pi}_v(u)\ln\tilde{\pi}_v(u)\, dudv\Big | {X}^{\tilde{{\boldsymbol{\pi}}}}_t=x\Big],$$
for $(t,x)\in [0,T]\times \mathbb{R}$ and $s\in [t,T]$. Now taking $s=T$, and using that $V^{\boldsymbol{\pi}}(T,x)=V^{\tilde{{\boldsymbol{\pi}}}}(T,x)=(x-w)^2-(w-z)^2$ together with the assumption that $\tilde{\pi}$ is admissible, we obtain, by sending  $n\rightarrow \infty$ and applying the dominated convergence theorem, that
$$V^{\boldsymbol{\pi}}(t,x)\geq\mathbb{E}\Big[V^{\tilde{{\boldsymbol{\pi}}}}(T,{X}^{\tilde{{\boldsymbol{\pi}}}}_{T})+\lambda \int_t^{T}\int_{\mathbb{R}^d}\tilde{\pi}_v(u)\ln\tilde{\pi}_v(u)\, dudv\Big | {X}^{\tilde{{\boldsymbol{\pi}}}}_t=x\Big]=V^{\tilde{{\boldsymbol{\pi}}}}(t,x),$$
for any $(t,x)\in [0,T]\times \mathbb{R}$.
\end{proof}

\subsection{Proof of Theorem \ref{convergence_learning}}\label{proof4}
\begin{proof}
It can be easily verified that the feedback policy $\boldsymbol{\pi}_0(u;t,x,w)=\mathcal{N}(u| \alpha(x-w), \Sigma e^{\beta(T-t)})$ generates an open-loop policy $\pi_0$ that is admissible with respect to the initial $(t,x)$. Moreover, it follows from the Feynman-Kac formula that the corresponding value function $V^{\boldsymbol{\pi}_0}$ satisfies the PDE
$$V^{\boldsymbol{\pi}_0}_t(t,x;w)+\int_{\mathbb{R}^d}\Big(\frac{1}{2}u'\sigma'\sigma u\, V^{\boldsymbol{\pi}_0}_{xx}(t,x;w)+\rho' \sigma  u V^{\boldsymbol{\pi}_0}_x(t,x;w)$$
\begin{equation}
+\lambda \ln {\pi}_0(u;t,x,w)\Big){\pi}_0(u;t,x,w)du=0,
\end{equation}
with terminal condition $V^{\boldsymbol{\pi}_0}(T,x;w)=(x-w)^2-(w-z)^2$.
Simplifying this equation we obtain
$$V_t^{\boldsymbol{\pi}_0}(t,x;w)+\frac{1}{2}V_{xx}^{\boldsymbol{\pi}_0}(t,x;w)\boldsymbol{\text{Tr}}\left(\sigma \alpha\alpha'\sigma (x-w)^2+\sigma \Sigma \sigma' e^{\beta (T-t)}\right)$$
\begin{equation}\label{first_value}
+V_{x}^{\boldsymbol{\pi}_0}(t,x;w)\rho'\sigma \alpha (x-w)-\frac{\lambda}{2}\big(d\ln(2\pi e)+\ln|\Sigma|+d\beta (T-t)\big)=0,
\end{equation}
where $\boldsymbol{\text{Tr}}(\cdot)$ denotes the trace of a square matrix. A classical solution to equation (\ref{first_value}) is given by
$$V^{\boldsymbol{\pi}_0}=(x-w)^2 e^{(2\rho'\sigma \alpha+\boldsymbol{\text{Tr}}(\sigma\alpha\alpha'\sigma'))(T-t)}+\frac{\boldsymbol{\text{Tr}}(\sigma \Sigma \sigma')e^{(\beta+2\rho'\sigma \alpha+\boldsymbol{\text{Tr}}(\sigma\alpha\alpha'\sigma'))(T-t)}}{\beta+2\rho'\sigma \alpha+\boldsymbol{\text{Tr}}(\sigma\alpha\alpha'\sigma')}$$
$$-\frac{\lambda d}{4}\beta t^2+\frac{\lambda d}{2}\left(\ln\left(2\pi e |\Sigma|^{\frac{1}{d}}\right)+\beta T\right)t-(w-z)^2-\frac{\boldsymbol{\text{Tr}}(\sigma \Sigma \sigma')}{\beta+2\rho'\sigma \alpha+\boldsymbol{\text{Tr}}(\sigma\alpha\alpha'\sigma')},$$
if $\beta+2\rho'\sigma \alpha+\boldsymbol{\text{Tr}}(\sigma\alpha\alpha'\sigma')\neq 0$ and, by
$$V^{\boldsymbol{\pi}_0}=(x-w)^2 e^{(2\rho'\sigma \alpha+\boldsymbol{\text{Tr}}(\sigma\alpha\alpha'\sigma'))(T-t)}-\frac{\lambda d}{4}\beta t^2+\left(\frac{\lambda d}{2}\left(\ln\left(2\pi e |\Sigma|^{\frac{1}{d}}\right)+\beta T\right)-\boldsymbol{\text{Tr}}(\sigma \Sigma \sigma')\right)t$$
$$-(w-z)^2,$$
if $\beta+2\rho'\sigma \alpha+\boldsymbol{\text{Tr}}(\sigma\alpha\alpha'\sigma')= 0$.
In either case, it is easy to check  that $V^{\boldsymbol{\pi}_0}$ satisfies the conditions in Theorem \ref{PIT} and, hence, the theorem applies. The improved policy is given by (\ref{new_policy}), which, in the current case, becomes
$$\boldsymbol{\pi}_1(u;t,x,w)=\mathcal{N}\left( \, u\, \Big| -\sigma^{-1}\rho (x-w)\ , \  \frac{\lambda (\sigma'\sigma)^{-1}}{2e^{(2\rho'\sigma \alpha+\boldsymbol{\text{Tr}}(\sigma\alpha\alpha'\sigma'))(T-t)}} \right).$$
Again, we can calculate the corresponding value function as $V^{\boldsymbol{\pi}_1}(t,x;w)=(x-w)^2e^{-\rho'\rho (T-t)}+F_1(t)$, where $F_1$ is a function of $t$ only. Theorem \ref{PIT} is applicable again, which yields the improved policy $\boldsymbol{\pi}_2$ as exactly the optimal Gaussian policy $\boldsymbol{\pi^*}$ given in (\ref{Gaussian_verified}), together with the optimal value function $V$ in (\ref{value_verified}). The desired convergence therefore follows, as for $n\geq 2$, both the policy and the value function will no longer strictly improve  under the policy improvement scheme (\ref{new_policy}).
\end{proof}

\section{Empirical Results: the Batch Method}\label{batch}
In Section 5.2, we provided the experiment results for monthly trading under universal training and testing. Another way to train and test the EMV and DDPG algorithms is based on batch (off-line) RL, as described in the main text. The batch method applies to the training and testing data from the {\it same} set/seed of $d=20$ stocks for each experiment, and the investment performance of the EMV algorithm over the $100$ seeds are reported in Figure \ref{fig_1}. Due to the extensive training time (see Table \ref{monthly_table}), we only train and test DDPG under the batch method for $8$ seeds. 

The batch method demonstrates qualitatively similar behavior as the universal training and testing method (see Figure \ref{fig_monthly}), when compared to the econometric methods and the deep RL method. A more detailed comparison between the two methods is shown in Figure \ref{fig_2}. It is interesting to notice that, while the two methods perform equally well for most of the testing period over $2000-2010$, the universal method is less affected by the $2008$ financial crisis with less variability and higher returns. The batch method, without taking into account the data of other stocks for each portfolio/seed during training and testing, is more susceptible to stock market plunge. Nonetheless, both methods are data efficient, especially in view that, for example, the training set for the batch method contains the same number of data points as the testing set decision making points ($120\times 20$).

\begin{figure}
\centering
  \begin{subfigure}[b]{0.45\textwidth}
    \includegraphics[width=\textwidth]{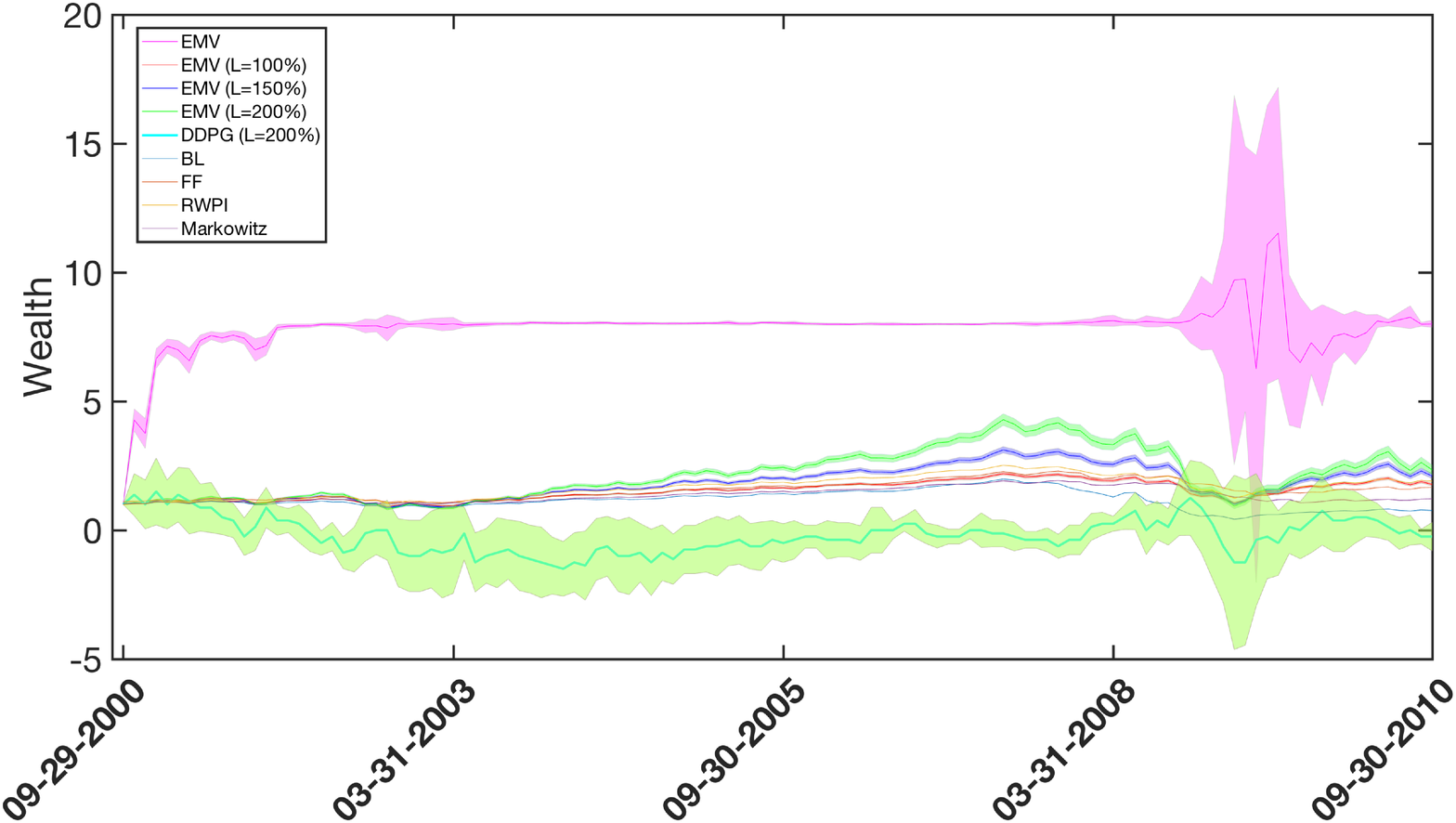}
    \caption{Batch method}
    \label{fig_1}
  \end{subfigure}
  \begin{subfigure}[b]{0.45\textwidth}
    \includegraphics[width=\textwidth]{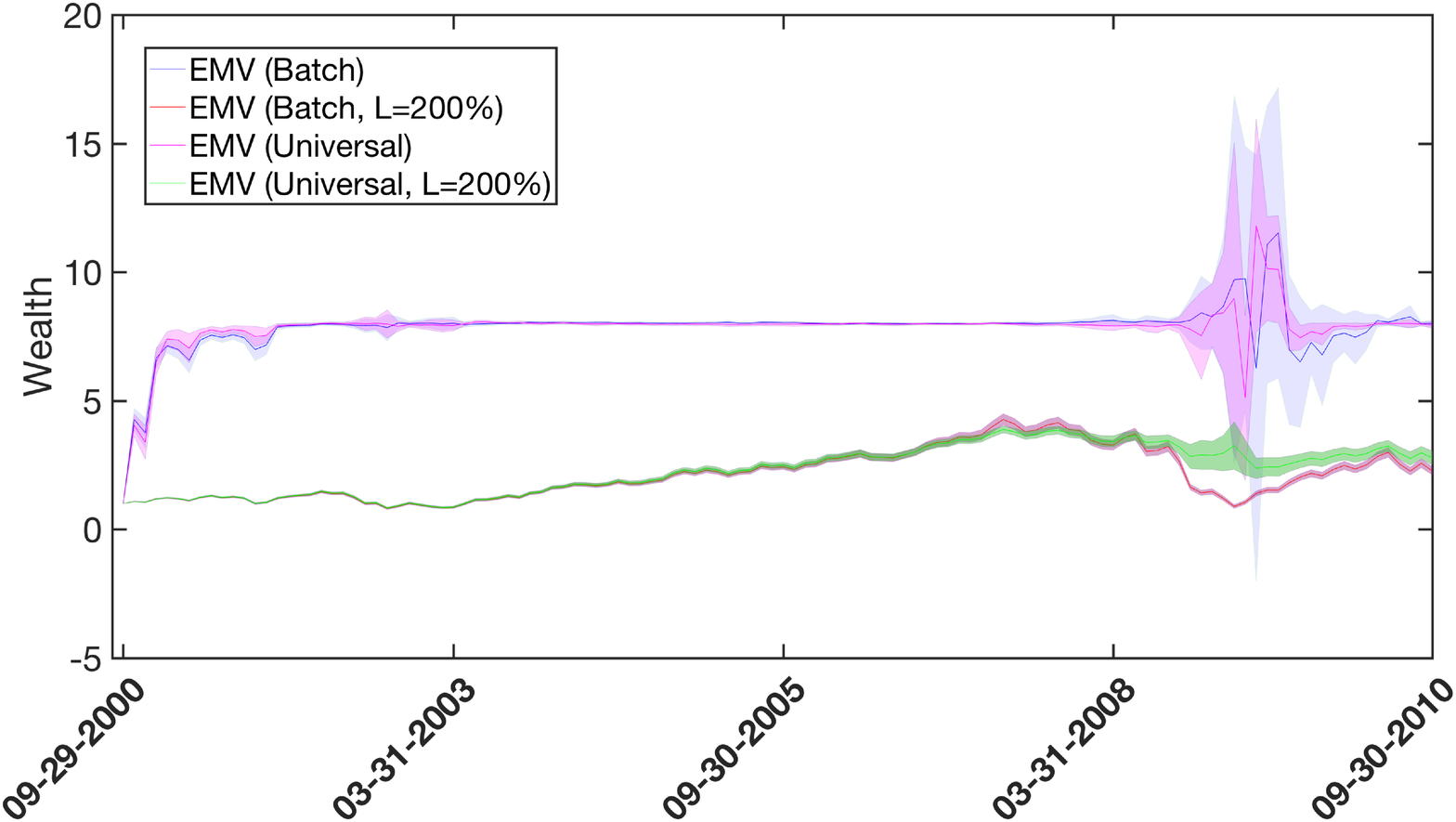}
    \caption{Batch and universal methods}
    \label{fig_2}
  \end{subfigure}
  \caption{Investment performance comparison for (a) batch RL training and testing and (b) batch method and universal method ($d=20$).}
\end{figure}
\end{document}